\newtheorem{theorem}{Theorem}
\begin{document}

\title{Solvable intermittent shell model of turbulence}
\date{Instituto de Matem\'atica Pura e Aplicada -- IMPA, Rio de Janeiro, Brazil \\[10pt]
\today}
\author{Alexei A. Mailybaev\footnote{E-mail: alexei@impa.br}}

\maketitle

\begin{abstract}
We introduce a shell model of turbulence featuring intermittent behaviour with anomalous power-law scaling of structure functions. This model is solved analytically with the explicit derivation of anomalous exponents. The solution associates the intermittency with the hidden symmetry for Kolmogorov multipliers, making our approach relevant for real turbulence.
\end{abstract}

\section{Introduction}

In the ideal fluid dynamics, equations for a velocity field $\mathbf{u}(\mathbf{r},t)$ possess a family of spatiotemporal scaling symmetries of the form  
	\begin{equation}
	\label{eq1}
	t,\mathbf{r},\mathbf{u} \mapsto \alpha^{1-h}t,\alpha\mathbf{r},\alpha^h\mathbf{u}
	\end{equation}
for arbitrary $h \in \mathbb{R}$ and $\alpha > 0$. In the developed turbulence~\cite{frisch1999turbulence}, symmetries (\ref{eq1}) are considered asymptotically at small scales of the so-called inertial interval, where both forcing and viscosity are negligible. Kolmogorov's phenomenological theory~\cite{kolmogorov1941local} proposed that the statistically stationary state of turbulence has symmetry (\ref{eq1}) in the inertial range with the exponent $h = 1/3$ related to a constant energy flux.

It is well-known, however, that the scale invariance (\ref{eq1}) is broken by the phenomenon of intermittency~\cite{frisch1999turbulence,falkovich2009symmetries}: at small scales, real turbulent flows display high activity within small fractions of time or space. Such flows retain some scaling properties described by the power-law relations 
	\begin{equation}
	\label{eq2}
	S_p(\ell) \propto \ell^{\zeta_p},
	\end{equation}
where $S_p(\ell) = \left\langle \left|\delta u_{\parallel}\right|^p\right\rangle$ is a structure function (moment) of order $p$, and $\delta u_{\parallel} = \big[\mathbf{u}(\mathbf{r}+\ell\mathbf{e},t)-\mathbf{u}(\mathbf{r},t)\big]\cdot \mathbf{e}$ is a velocity difference at a distance $\ell$ in a direction $\mathbf{e}$. Here, the average is evaluated with respect to time and does not depend on $\mathbf{r}$ and $\mathbf{e}$ for homogeneous and isotropic turbulence. Unlike Kolmogorov's prediction $\zeta_p = p/3$ following from scaling (\ref{eq1}) with $h = 1/3$, intermittent flows exhibit exponents $\zeta_p$ depending nonlinearly on $p$. This anomalous scaling can be interpreted as a multi-fractal structure of the flow, in which the scale invariance (\ref{eq1}) is restored on a multitude of fractal subsets with fractal dimensions depending on $h$~\cite{frisch1985singularity,frisch1999turbulence}. Theoretical explanation of the anomalous scaling  remains a major open problem in turbulence. 

Shell models of turbulence~\cite{gledzer1973system,ohkitani1989temporal,l1998improved,biferale2003shell} are simplified caricatures of fluid dynamics, which possess scaling symmetries analogous to (\ref{eq1}) and, possibly, some inviscid invariants resembling physical conservation laws. These models are constructed for complex scalar variables $u_n(t)$ with integer indices $n$, which mimic velocity fluctuations (or Fourier modes) corresponding to scales $\ell_n \sim \lambda^{-n}$. One calls $n$ a shell number and $u_n$ a shell velocity. Conventionally, we define the wavenumbers  $k_n = 1/\ell_n = \lambda^n$ and take $\lambda = 2$. Though shell models are discrete, the set of scales $\ell_n$ satisfies the scaling relation 
	\begin{equation}
	\label{eq3b}
	\ell_{n+m} = \alpha\ell_n
	\end{equation}
for any
	\begin{equation}
	\label{eq3a}
	\alpha = \lambda^{-m}, \quad m \in \mathbb{Z}.
	\end{equation}
Thus, scaling symmetries (\ref{eq1}) can be reformulated for shell models as
	\begin{equation}
	\label{eq3}
	t,u_n \mapsto \alpha^{1-h}t,\alpha^hu_{n+m}
	\end{equation}
for arbitrary $h \in \mathbb{R}$ and $\alpha$ of the form (\ref{eq3a}). The structure functions are defined as $S_p(\ell_n) = \left\langle |u_n|^p\right\rangle$. In numerical studies, shell models successfully reproduce the phenomenon of intermittency: power laws (\ref{eq2}) are observed at small scales $\ell = \ell_n$ of the inertial interval with the exponents $\zeta_p$ depending nonlinearly on $p$ and being close to their values in real turbulence~\cite{biferale2003shell}. However, similarly to the full model, theoretical explanation of intermittency in shell models remains open. 

In the present work, we propose a solvable intermittent shell model: it allows the rigorous analysis of intermittency with explicit computation of anomalous exponents. For this purpose, we pursue the celebrated conception of making the model ``as simple as possible, but no simpler''. Unlike previous shell models, we relax the requirement of energy conservation: the energy imposes some restrictions on scaling exponents (e.g., the value $\zeta_3 = 1$ is attributed to the energy cascade), but it does not seem to be intrinsic for the anomaly. Indeed, theoretical attempts to explain intermittency using universal multipliers~\cite{kolmogorov1962refinement,benzi1993intermittency,eyink2003gibbsian}, multi-fractality~\cite{frisch1985singularity} or zero-modes~\cite{falkovich2001particles,biferale2003shell} do not rely on energy conservation. At the same time, symmetries (\ref{eq3}) must be preserved because they are immanent in the phenomenon of intermittency. 

Our model is designed such that it can be reduced to a system of decoupled equations for Kolmogorov multipliers (velocity ratios) using a hierarchy of intrinsic times at different scales of motion. This construction follows the ideas of~\cite{benzi1993intermittency,eyink2003gibbsian} and illustrates the concept of hidden scaling symmetry developed in~\cite{mailybaev2021hidden,mailybaev2020hidden}. We introduce the model in Section~\ref{sec_M}. Main results are formulated and proved in Section~\ref{sec_R}. Section~\ref{sec_N} presents a numerical demonstration. In the last Section~\ref{sec_D}, we discuss the hidden symmetry and the relevance of our approach for real turbulence. 

\section{Model}\label{sec_M}

In our shell model, the complex shell variables $u_n(t)$ satisfy the equations
	\begin{equation}
	\label{eq4}
	\frac{du_n}{dt} = k_n|u_n|^2 F_n
	+u_n\sum_{m < n}\mathrm{Re}\left(k_mu_m^*F_m\right),
	\end{equation}
where 
	\begin{equation}
	\label{eq5}
	F_n = \frac{|u_{n-1}|^2}{|u_n|^2} f\left(\frac{u_n}{|u_{n-1}|}\right)
	\end{equation}
with the function $f(z)$ specified later in (\ref{eq9}). For any complex function $f(z)$, all quantities $F_n$ in (\ref{eq5}) are homogeneous functions of degree 0 with respect to shell velocities and, hence, the right-hand side of system (\ref{eq4}) is homogeneous of degree 2. As a consequence, one can see that system (\ref{eq4}) possesses all the symmetries (\ref{eq3}). 
Notice that though the right-hand side of (\ref{eq4}) has the required scaling properties, it is not a quadratic form in shell velocities. Furthermore, functions (\ref{eq5}) are singular for vanishing shell variables. This is no problem: we will see that none of variables vanish in our model.
 
As in the theory of turbulence~\cite{frisch1999turbulence}, we must modify our model at large scales by introducing the so-called forcing range. We restrict time-dependent variables $u_n(t)$ to the shells $n = 1,2,\ldots$ and set 
	\begin{equation}
	\label{eq5bn}
	u_0(t) \equiv 1
	\end{equation}
at the integral scale $\ell_0$. Then, equations in (\ref{eq4}) are considered for $n > 0$ with the sum restricted to the range $0 < m < n$. 

We remark that the model is closed for a truncated collection of shell variables $u_1,\ldots,u_N$ for any $N > 0$. This property is crucially used in our analysis, which is based on the theory of finite-dimensional dynamical systems. Unlike most conventional shell models, we do not need viscous terms because our model does not conserve energy. Also, the sum in equation (\ref{eq4}) contains nonlocal  coupling to the modes at large-scales $\ell_m \gg \ell_n$: this is the property of realistic flows that is not present in usual shell models. 

We choose the complex function $f$ in (\ref{eq5}) written in polar coordinates as
	\begin{equation}
	\label{eq9}
	f(z) = \left(R(\theta)+\frac{dR(\theta)}{d\theta}-\rho
	+i\rho\right)e^{i\theta},\quad
	z = \rho e^{i\theta},
	\end{equation}
where $R(\theta)$ is any $C^2$ positive real function on the circle $\theta \in S^1\, (\mathrm{mod}\ 2\pi)$. Additionally, we require that $R(\theta)$ satisfies the condition
	\begin{equation}
	\label{eq9C}
	R(\theta)+\frac{dR(\theta)}{d\theta} > 0\quad \textrm{for}\quad \theta \in S^1,
	\end{equation}
which facilitates the proofs below.

\section{Intermittency}\label{sec_R}

Solutions of the proposed shell model are described by the following two theorems. The first theorem states the existence, uniqueness and boundness of solutions. 	
 
\begin{theorem}\label{th1}
For any initial condition with nonzero shell velocities $u_n(0) \ne 0$, $n > 0$, there exists a unique solution $u_n(t)$ of the shell model for $t \ge 0$. There exist constants $0 < c_n < C_n$ depending on initial conditions, such that
	\begin{equation}
	\label{eq_bound}
	c_n \le |u_n(t)| \le C_n \quad \textrm{for} \quad t \ge 0.
	\end{equation}
\end{theorem}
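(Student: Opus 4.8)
The plan is to exploit the triangular structure of the model: by (\ref{eq4})--(\ref{eq5bn}), the right-hand side of the equation for $u_n$ depends only on $u_1,\dots,u_n$ (recall $u_0\equiv1$). One can therefore determine the shells recursively --- $u_1$, then $u_2$, and so on --- and prove existence, uniqueness and the bound (\ref{eq_bound}) by induction on $n$, the base case being $u_0\equiv1$.

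Fix $n\ge1$ and assume $u_1,\dots,u_{n-1}$ are already determined for all $t\ge0$ with $c_{n-1}\le|u_{n-1}(t)|\le C_{n-1}$. With these functions frozen as coefficients, the equation for $u_n$ has a right-hand side that is continuous in $t$ and locally Lipschitz in $u_n$ on $\{u_n\neq0\}$: indeed $k_n|u_n|^2F_n=k_n|u_{n-1}|^2f(u_n/|u_{n-1}|)$ with $f\in C^1(\mathbb{C}\setminus\{0\})$ since $R\in C^2$, and the remaining term is linear in $u_n$. Hence a unique local solution exists as long as $u_n\neq0$. Write $u_n=r_ne^{i\theta_n}$ with $r_n=|u_n|$ and $\theta_n$ a continuous branch of $\arg u_n$, and set $\rho_n=|u_n|/|u_{n-1}|$. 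Using $k_n|u_n|^2F_n/u_n=k_n|u_{n-1}|\,f(z_n)/z_n$ with $z_n=u_n/|u_{n-1}|$, together with the identity $f(z)/z=(R(\theta)+R'(\theta))/\rho-1+i$ for $z=\rho e^{i\theta}$ coming from (\ref{eq9}), one separates real and imaginary parts of $\dot u_n/u_n$ to obtain, on the interval of existence,
\begin{equation}
\label{eq_polar}
\dot\theta_n=k_n|u_{n-1}|,\qquad \dot\rho_n=k_n|u_{n-1}|\bigl(R(\theta_n)+R'(\theta_n)-\rho_n\bigr).
\end{equation}

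The bound then follows from a comparison argument applied to the scalar equation for $\rho_n$. By (\ref{eq9C}) and compactness of $S^1$, the continuous function $R+R'$ satisfies $0<m_*:=\min_{S^1}(R+R')$ and $M_*:=\max_{S^1}(R+R')<\infty$. Since $k_n|u_{n-1}|>0$, the second equation in (\ref{eq_polar}) gives $\dot\rho_n>0$ whenever $\rho_n<m_*$ and $\dot\rho_n<0$ whenever $\rho_n>M_*$, so a standard first-crossing argument yields $\min(\rho_n(0),m_*)\le\rho_n(t)\le\max(\rho_n(0),M_*)$ throughout the interval of existence. Hence $|u_n|=|u_{n-1}|\rho_n$ stays in a compact subset of $(0,\infty)$, which rules out blow-up and vanishing of $u_n$ in finite time; the local solution therefore extends to $[0,\infty)$, and multiplying the bounds for $\rho_n$ by those for $|u_{n-1}|$ produces constants $0<c_n\le|u_n(t)|\le C_n$. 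This closes the induction.

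The only delicate point is the nonvanishing of $u_n$: the polar substitution, and with it (\ref{eq_polar}), is singular precisely at $u_n=0$, so the comparison estimate has to be run on the open maximal interval of existence and used only afterwards --- through $\rho_n\ge\min(\rho_n(0),m_*)>0$ --- to conclude a posteriori that this interval is all of $[0,\infty)$. This is where hypothesis (\ref{eq9C}) genuinely enters: it makes the drift in the $\rho_n$-equation repel the origin. Everything else --- local existence, uniqueness, and the computation leading to (\ref{eq_polar}) --- is routine.
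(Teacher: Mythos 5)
Your argument is correct and follows essentially the same route as the paper: both hinge on passing to the multiplier $w_n=u_n/|u_{n-1}|=\rho_ne^{i\theta_n}$, whose dynamics decouples into your scalar system for $(\rho_n,\theta_n)$ (the paper writes it in the rescaled time $d\tau_n=k_n|u_{n-1}|\,dt$ and integrates it in closed form, whereas you stay in physical time and use an invariant-region/comparison argument, organizing the continuation as an induction on $n$ rather than via the finite truncation), and both use condition (\ref{eq9C}) to keep $\rho_n$ away from zero. The one step worth spelling out is the derivation of the $\dot\rho_n$-equation: the real part of $\dot u_n/u_n$ gives $d\log|u_n|/dt$, which still contains the nonlocal sum $\sum_{m<n}\mathrm{Re}\left(k_mu_m^*F_m\right)$, and this term disappears only after subtracting $d\log|u_{n-1}|/dt$, which equals that very sum --- a cancellation that is exactly what the nonlocal coupling in (\ref{eq4}) is designed to produce.
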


\begin{proof}
Equations (\ref{eq4}) form a decoupled system for any set of shell variables $u_1,\ldots,u_N$. Hence, it is sufficient to prove the theorem by considering such a finite-dimensional subsystem for arbitrary $N$.
The right-hand sides in this subsystem are continuously differentiable and bounded if conditions (\ref{eq_bound}) are satisfied. This guarantees the existence and uniqueness of solution for $t \ge 0$. Therefore, we will prove the theorem by verifying conditions (\ref{eq_bound}).

Let us introduce a multiplier $w_n$ and a rescaled time $\tau_n$ at each shell $n$ as
	\begin{equation}
	\label{eq6}
	w_n = \frac{u_n}{|u_{n-1}|},\quad d\tau_n = k_n|u_{n-1}|dt,
	\end{equation}
with $\tau_n = 0$ at $t = 0$.
There is one-to-one correspondence between the shell velocities $u_n$ and multipliers $w_n$, provided that all of them are nonzero. This correspondence is defined by expressions (\ref{eq5bn}) and (\ref{eq6}) with the inverse relations given by
	\begin{equation}
	\label{eq16}
	u_1 = w_1, \quad u_n = w_n\prod_{m = 1}^{n-1}|w_m|.
	\end{equation}
	
Using (\ref{eq6}), we obtain
	\begin{equation}
	\label{eq7}
	\begin{array}{l}
	\displaystyle
	\frac{dw_n}{d\tau_n} 
	= \frac{1}{k_{n}|u_{n-1}|}\frac{d}{dt}\frac{u_n}{|u_{n-1}|} 
	\displaystyle
	= \frac{1}{k_{n}|u_{n-1}|^2}\frac{du_n}{dt}
	-\frac{u_n}{k_{n}|u_{n-1}|^3}\frac{d|u_{n-1}|}{dt}
	\\[15pt]
	\displaystyle
	\qquad = \frac{1}{k_{n}|u_{n-1}|^2}\frac{du_n}{dt}
	-\frac{u_n}{k_{n}|u_{n-1}|^4}\,\mathrm{Re}\left(u_{n-1}^*\frac{du_{n-1}}{dt}\right).
	\end{array}
	\end{equation}
Substituting equations (\ref{eq4}) and (\ref{eq5}) into (\ref{eq7}), a long but straightforward derivation yields the simple relation
	\begin{equation}
	\label{eq8}
	\frac{dw_n}{d\tau_n} = f(w_n).
	\end{equation}
This equation can be written using polar coordinates $w_n = \rho_n e^{i\theta_n}$ and expression (\ref{eq9}) as
	\begin{eqnarray}
	\label{eq8PCa}
	\displaystyle
	\frac{d\rho_n}{d\tau_n} 
	& = & 
	\displaystyle
	\mathrm{Re}\left[e^{-i\theta_n}f(w_n)\right] = R(\theta_n)+\frac{dR(\theta_n)}{d\theta_n}-\rho_n,
	\\[7pt]
	\label{eq8PCb}
	\displaystyle
	\frac{d\theta_n}{d\tau_n} 
	& = & 
	\displaystyle
	\frac{\mathrm{Im}\left[e^{-i\theta_n}f(w_n)\right]}{\rho_n} = 1.
	\end{eqnarray}
Solving these equations, we find a general solution in the form
	\begin{equation}
	\label{eq8r}
	w_n(\tau_n) = \left[r_ne^{-\tau_n}+R(\tau_n+\varphi_n)\right]e^{i(\tau_n+\varphi_n)},
	\end{equation}
where $\varphi_n$ and $r_n$ are defined by the initial condition as
	\begin{equation}
	\label{eq8rIC}
    \varphi_n = \arg w_n(0),\quad 
    r_n = |w_n(0)|-R(\varphi_n).
    \end{equation}
Expression (\ref{eq8r}) yields the upper bound
	\begin{equation}
	\label{eq11U}
	|w_n(\tau_n)| \le C'_n 
	\end{equation}
for $\tau_n \ge 0$ and  $C'_n = |r_n|+\max_\theta R(\theta)$. Due to condition (\ref{eq9C}), the derivative (\ref{eq8PCa}) is positive for $\rho_n < R_*$ and arbitrary $\theta_n$, where we introduced a positive constant $R_* = \min_\theta\left[R(\theta)+dR/d\theta\right]$. Hence, by taking $c'_n = \min\{\rho_n(0),R_*\} > 0$, we have
	\begin{equation}
	\label{eq11Ub}
	c'_n \le \rho_n(\tau_n) = |w_n(\tau_n)|
	\end{equation}
for $\tau_n \ge 0$. Expressions (\ref{eq16}) with (\ref{eq11U}) and (\ref{eq11Ub}) imply the analogous bounds (\ref{eq_bound}) for the variables $u_n$. \end{proof}

Now we formulate and prove the main result:
	
\begin{theorem}\label{th2}
Let us assume that the mean value
    \begin{equation}
	\label{eqDF2}
	R_0 = \frac{1}{2\pi}\int_0^{2\pi}
	R(\theta)d\theta
	\end{equation}
is a transcendental number. Then, for any initial condition with nonzero shell velocities $u_n(0) \ne 0$, $n > 0$, the structure function 
	\begin{equation}
	\label{eq13b}
	S_p(\ell_n) = \lim_{T \to +\infty} \frac{1}{T} \int_0^T  |u_n(t)|^p dt
	\end{equation}
of any order $p \in \mathbb{R}$ has the power-law form
	\begin{equation}
	\label{eq13d}
	S_p(\ell_n) = \ell_n^{\zeta_p}
	\end{equation}
with the exponent 
	\begin{equation}
	\label{eq13exp}
	\zeta_p = -\log_2 \left(\frac{1}{2\pi}
	\int_0^{2\pi} 
	R^p(\theta)d\theta\right).
	\end{equation}
\end{theorem}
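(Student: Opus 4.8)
The plan is to reduce everything to the multipliers $w_n=\rho_ne^{i\theta_n}$ already solved in Theorem~\ref{th1} and then to read off the long-time behaviour as an equidistribution statement on a torus. By (\ref{eq16}) one has $|u_n(t)|=\prod_{m=1}^{n}\rho_m(\tau_m(t))$, and by (\ref{eq8r}), $\rho_m(\tau_m)=R(\theta_m)+s_m$ with $\theta_m=\tau_m+\varphi_m$ and $s_m=r_me^{-\tau_m}$. Differentiating $\tau_m$ using (\ref{eq6}) and the identity $d\tau_m/d\tau_{m-1}=(k_m/k_{m-1})\rho_{m-1}=\lambda\rho_{m-1}$ (with $\tau_1=k_1t$, since $u_0\equiv 1$), I obtain a closed autonomous system on $\mathbb{T}^n\times\mathbb{R}^n$, namely $\dot\theta_1=k_1$, $\dot\theta_m=k_m\prod_{j<m}(R(\theta_j)+s_j)$, $\dot s_m=-s_m\dot\theta_m$, in which every $\tau_m(t)\to+\infty$ linearly (because each $\rho_j$ is pinned between the positive constants of Theorem~\ref{th1}), hence $s_m(t)\to 0$ exponentially. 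On the invariant attracting set $\{s=0\}$ this is the torus flow $\mathcal{F}:\ \dot\theta_m=k_m\prod_{j<m}R(\theta_j)$; its generator has each component independent of $\theta_m$, so Lebesgue measure on $\mathbb{T}^n$ is $\mathcal{F}$-invariant. Finally $S_p(\ell_n)$ is the Cesàro time average of $\prod_m\rho_m(\tau_m(t))^p$ along a trajectory.

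The key step is to prove that $\mathcal{F}$ is \emph{uniquely} ergodic with Lebesgue measure, and this is where the hypothesis on $R_0$ enters. Integrating $d\tau_m=\lambda\rho_{m-1}\,d\tau_{m-1}$ on $\{s=0\}$ with $\tilde R(\theta)=\int_0^\theta(R-R_0)$ (a smooth $2\pi$-periodic function) gives the explicit recursion $\tau_m=\lambda R_0\,\tau_{m-1}+\lambda\tilde R(\theta_{m-1})+\mathrm{const}$, and unwinding it down to $\tau_1=k_1t$ yields $\psi_m:=\theta_m-\lambda\sum_{j=1}^{m-1}(\lambda R_0)^{m-1-j}\tilde R(\theta_j)=\lambda^mR_0^{\,m-1}t+\mathrm{const}$. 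The map $\theta\mapsto\psi$ is a volume-preserving $C^2$ diffeomorphism of $\mathbb{T}^n$ (triangular with unit diagonal Jacobian) conjugating $\mathcal{F}$ to the linear flow with frequency vector $\omega=(\lambda,\lambda^2R_0,\dots,\lambda^nR_0^{\,n-1})$ — crucially, no small-divisor problem arises, the conjugacy being a finite telescoping sum. A linear flow on $\mathbb{T}^n$ is minimal, hence uniquely ergodic (Weyl), iff $\omega_1,\dots,\omega_n$ are $\mathbb{Q}$-linearly independent; dividing out the rational factors $\lambda^m$, this is equivalent to $1,R_0,R_0^2,\dots,R_0^{\,n-1}$ being $\mathbb{Q}$-linearly independent, which holds for every $n$ precisely because $R_0$ is transcendental. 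Thus $\mathcal{F}$ is uniquely ergodic on $\mathbb{T}^n$, its only invariant probability measure being Lebesgue.

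To transfer this to the true trajectory ($s\ne 0$), note that since $s(t)\to 0$ the non-autonomous field driving $\theta(t)$ converges uniformly to the generator $\mathcal{L}$ of $\mathcal{F}$, so for any $g\in C^1(\mathbb{T}^n)$ we get $\tfrac1T\int_0^T(\mathcal{L}g)(\theta(t))\,dt=\tfrac{g(\theta(T))-g(\theta(0))}{T}-o(1)\to 0$; hence every weak-$*$ limit of the empirical measures $\mu_T=\tfrac1T\int_0^T\delta_{\theta(t)}\,dt$ is $\mathcal{F}$-invariant, so $\mu_T\to\mathrm{Leb}$ by unique ergodicity. Using $\rho_m\ge c'_m>0$ and $s_m\to 0$ to write $\prod_m\rho_m^p=\prod_mR(\theta_m)^p+o(1)$, this gives $S_p(\ell_n)=\int_{\mathbb{T}^n}\prod_{m=1}^nR(\theta_m)^p\,d\mathrm{Leb}=\bigl(\tfrac1{2\pi}\int_0^{2\pi}R^p(\theta)\,d\theta\bigr)^n$, and since $\ell_n=\lambda^{-n}=2^{-n}$ this equals $2^{-n\zeta_p}=\ell_n^{\zeta_p}$ with $\zeta_p$ as in (\ref{eq13exp}).

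I expect the main obstacles to be: (i) verifying the explicit $\tau_m$-recursion and that $\theta\mapsto\psi$ is a genuine torus diffeomorphism — the ``long but straightforward'' bookkeeping that underlies the linearization; and (ii) the transfer step for general real $p$, where the decaying transients $s_m$ must be controlled uniformly (which is fine thanks to the two-sided bounds of Theorem~\ref{th1}, but needs care for negative and non-integer $p$). The conceptual point — and the place where the hypothesis is exactly the right one — is that $\mathcal{F}$ linearizes \emph{without} any Diophantine assumption, so that bare $\mathbb{Q}$-independence of the powers of $R_0$, i.e.\ transcendentality, already forces equidistribution.
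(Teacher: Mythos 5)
Your proposal is correct and follows essentially the same route as the paper: reduction to the multiplier dynamics, the attracting invariant torus $w_n=R(\theta_n)e^{i\theta_n}$, the explicit triangular change of variables (your $\psi_m$ is exactly the paper's $\alpha_n=\theta_n+h_n$ with $h_n$ from (\ref{eqM6b2})) conjugating the flow to a linear one with frequencies $k_nR_0^{n-1}$, and Kronecker--Weyl unique ergodicity from the transcendentality of $R_0$. The one place you go beyond the paper is the explicit empirical-measure argument transferring unique ergodicity on the attractor to trajectories starting off it (with the $s_m\to0$ transients), a step the paper leaves implicit in its appeal to the ergodic theorem.
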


\begin{proof}
Let us consider the shell model equations for $u_1,\ldots,u_N$ with arbitrary fixed $N$. As we showed in the proof of Theorem~\ref{th1}, this is a decoupled subsystem. We will now derive the invariant probability measure for this system, prove its unique ergodicity, and use this property for computing the structure functions. 

Let us consider the dynamics for multipliers $w_1,\ldots,w_N$, which are uniquely related to $u_1,\ldots,u_N$. According to (\ref{eq8r}), the system has a globally attracting invariant manifold given by the limit cycles 
	\begin{equation}
	\label{eq10}
	w_n = R(\theta_n)e^{i\theta_n},
	\quad
	n = 1,\ldots,N,
	\end{equation}
where $\theta_n = \tau_n+\varphi_n$.
Using (\ref{eq10}) in (\ref{eq16}), we represent this invariant manifold in terms of original shell variables as
	\begin{equation}
	\label{eqM1}
	u_n = e^{i\theta_n}\prod_{m = 1}^n R(\theta_m),
	\quad
	n = 1,\ldots,N,
	\end{equation}
which is parametrized by the set of phase variables $(\theta_1,\ldots,\theta_N)$ on the $N$-dimensional torus $\mathbb{T}^N = (S^1)^N$. Using (\ref{eqM1}) in the second expression of (\ref{eq6}), we write equation (\ref{eq8PCb}) in terms of the original time as
	\begin{equation}
	\label{eqM3}
	\frac{d\theta_n}{dt} = k_n\prod_{m = 1}^{n-1}R(\theta_m).
	\end{equation}

Let us introduce the new coordinates $(\alpha_1,\ldots,\alpha_N) \in  \mathbb{T}^N$ on the torus as
	\begin{equation}
	\label{eqM4}
	\alpha_n = \theta_n+h_n(\theta_1,\ldots,\theta_{n-1}),\quad
	n = 1,\ldots,N,
	\end{equation}
with continuously differentiable functions $h_n:\mathbb{T}^{n-1}\mapsto S^1$. 
Below we will prove that these functions can be chosen such that equations (\ref{eqM3}) in new variables take the form
	\begin{equation}
	\label{eqM3b}
	\frac{d\alpha_n}{dt} = k_nR_0^{n-1}.
	\end{equation}
Since $R_0$ is transcendental and $k_n$ is integer, the numbers 
	\begin{equation}
	\label{eqM3bR}
	k_1,k_2R_0,\ldots,k_NR_0^{N-1} 
	\end{equation}
in the right-hand sides of (\ref{eqM3b}) are rationally independent. By the Kronecker--Weyl equidistribution theorem~\cite{katok1997introduction}, system of equations (\ref{eqM3b}) for $n = 1,\ldots,N$ is uniquely ergodic with the uniform (Lebesgue) invariant measure on the torus $\mathbb{T}^N$. Since the Jacobian for the change of coordinates (\ref{eqM4}) is unity, the same is true for system (\ref{eqM3}). Using (\ref{eqM1}), we write 
	\begin{equation}
	\label{eq13bB}
	|u_n|^p = \prod_{m = 1}^n R^p(\theta_m).
	\end{equation}
By the ergodic theorem~\cite{katok1997introduction}, for any initial conditions with nonzero components, the limit in (\ref{eq13b}) is equal to the average with respect to angles:
	\begin{equation}
	\label{eq13bA}
	S_p(\ell_n) 
	= \frac{1}{(2\pi)^n}\int\cdots\int \prod_{m = 1}^n R^p(\theta_m)d\theta_m 
	= \left(\frac{1}{2\pi}\int R^p(\theta)d\theta\right)^n.
	\end{equation}
This expression yields (\ref{eq13d}) and (\ref{eq13exp}) for $\ell_n = 2^{-n}$.

It remains to find the proper functions $h_n$. Let us write 
	\begin{equation}
	\label{eqM6}
	R(\theta) = R_0+R_1(\theta),
	\end{equation}
where $R_1(\theta)$ is a periodic function with zero mean value. 
Using (\ref{eqM3}) and (\ref{eqM4}), we obtain
	\begin{equation}
	\label{eqM5}
	\frac{d\alpha_n}{dt} = \frac{d\theta_n}{dt}
	+\sum_{m = 1}^{n-1}\frac{\partial h_n}{\partial \theta_m}
	\frac{d\theta_m}{dt}
	= k_n\prod_{m = 1}^{n-1}R(\theta_m)
	+\sum_{m = 1}^{n-1}\frac{\partial h_n}{\partial \theta_m}
	\left(k_m\prod_{j = 1}^{m-1}R(\theta_j)\right).
	\end{equation}
One can check that the first product in the right-hand side can be written using (\ref{eqM6}) as
	\begin{equation}
	\label{eqM5P}
	\prod_{m = 1}^{n-1}R(\theta_m)
	= R_0^{n-1}+\sum_{m = 1}^{n-1}
	\left(\prod_{j = 1}^{m-1}R(\theta_j)\right)R_1(\theta_m)R_0^{n-1-m}.
	\end{equation}
Using (\ref{eqM5P}) in (\ref{eqM5}) yields
	\begin{equation}
	\label{eqM5P2}
	\frac{d\alpha_n}{dt} = k_nR_0^{n-1}+\sum_{m = 1}^{n-1}\left(
	k_n R_1(\theta_m) R_0^{n-1-m}
	+k_m\frac{\partial h_n}{\partial \theta_m}\right)\prod_{j = 1}^{m-1}R(\theta_j).
	\end{equation}
Let us define 
	\begin{equation}
	\label{eqM6b2}
	h_n(\theta_1,\ldots,\theta_{n-1}) = -\sum_{m = 1}^{n-1}
	k_{n-m}R_0^{n-1-m}Q(\theta_m),\quad
	Q(\theta) = \int_0^\theta R_1(\theta')d\theta',
	\end{equation}
where $Q(\theta)$ is a periodic function because $R_1(\theta)$ has zero mean.
One can verify that equation (\ref{eqM5P2}) with $h_n$ from (\ref{eqM6b2}) yields (\ref{eqM3b}). \end{proof}

\section{Numerical tests}\label{sec_N}

One can verify that $\zeta_p$ given by formula (\ref{eq13exp}) is a concave function of $p$ with the properties
	\begin{equation}
	\label{eqN1P}
	\zeta_0 = 0,\quad \lim_{p \to +\infty}\frac{\zeta_p}{p} = -\log_2\left[\max_{\theta\in S^1} R(\theta)\right],\quad
	\lim_{p \to -\infty}\frac{\zeta_p}{p} = -\log_2\left[\min_{\theta\in S^1} R(\theta)\right].
	\end{equation}
By selecting different functions $R(\theta)$, one can obtain different forms of $\zeta_p$ of this kind.
As a specific example, let us consider
	\begin{equation}
	\label{eqN1}
	R(\theta) = R_0\left(1+\frac{\cos \theta}{2}\right),\quad
	R_0 = \frac{2\pi}{9}.
	\end{equation}
Figure~\ref{fig1}(a) shows the graph of $\zeta_p$ obtained numerically by formula (\ref{eq13exp}). One can see that the exponents depend nonlinearly on $p$, therefore, demonstrating the anomalous scaling behaviour. For integer values of $p$, one can find these exponents analytically using known expressions for integrals of trigonometric functions as 
	\begin{equation}
	\label{eqN2a}
	\zeta_p = -p\log_2R_0-\log_2I_p,\quad
	I_p = \frac{1}{2\pi}\int_0^{2\pi} \left(1+\frac{\cos \theta}{2}\right)^pd\theta
	= \sum_{k = 0}^{[p/2]}  \frac{p!16^{-k}}{k!k!(p-2k)!}.
	\end{equation}

\begin{figure}[t]
\centering
\includegraphics[width=0.8\textwidth]{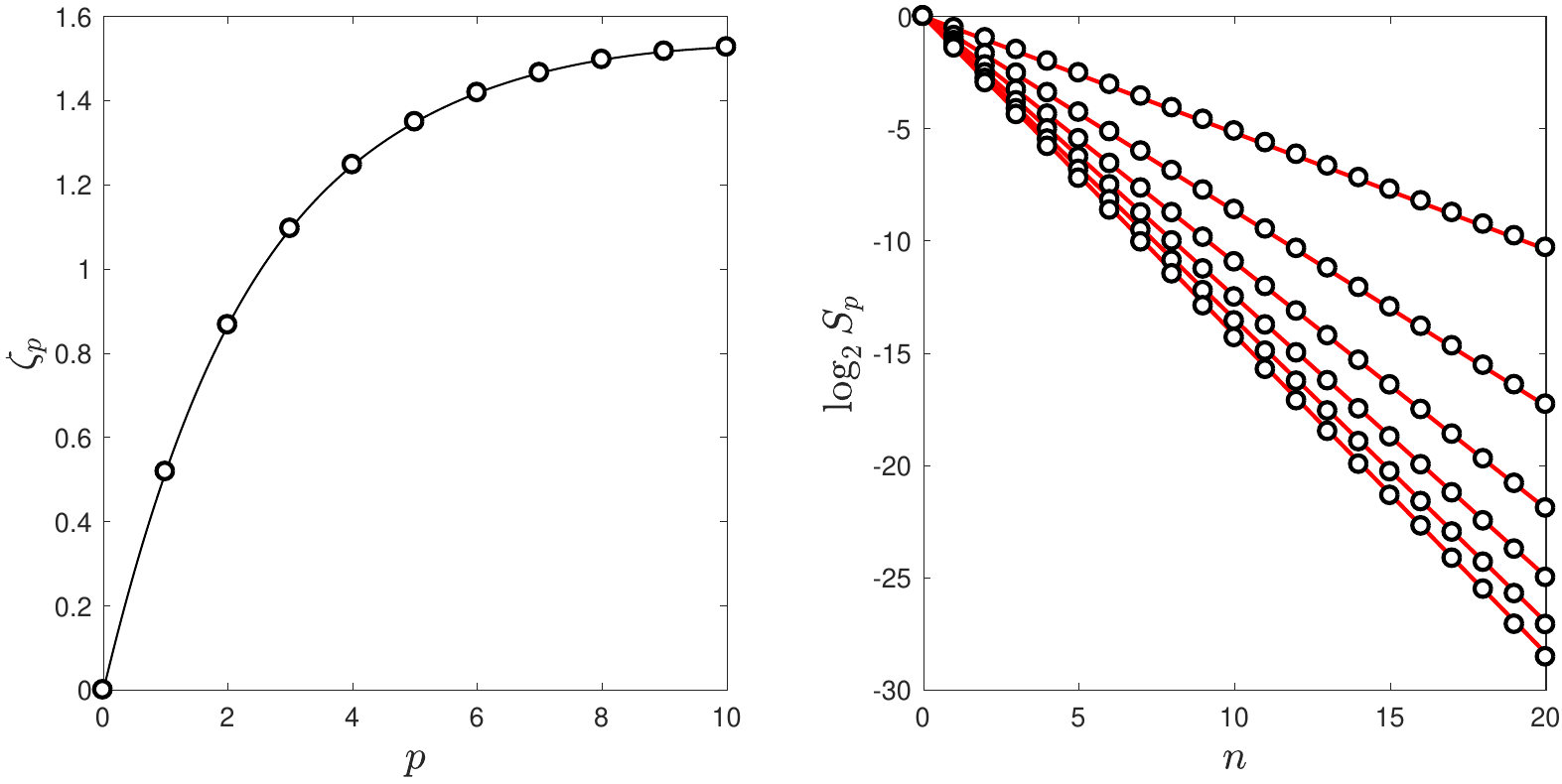}
\caption{(a) Anomalous exponents $\zeta_p$ of structure functions computed by formula (\ref{eq13exp}). (b) Structure functions (\ref{eq13b}) obtained by the numerical simulation for the orders $p = 1,\ldots,6$ (empty dots) are compared with the analytical prediction (\ref{eq13d}) and (\ref{eqN2a}) (red lines).}
\label{fig1}
\end{figure}

We verify this result by integrating equations (\ref{eq4})--(\ref{eq5bn}) and (\ref{eqN1}) of the shell model numerically with the initial condition $u_n(0) = k_n^{-1/3}e^{in}$ for the shells $n = 1,\ldots,20$. The numerical simulation was carried out with high accuracy in the large time interval $\Delta t = 10100$. Initial times $t < 100$ were ignored in the computation of structure functions (\ref{eq13b}) in order to reduce transient effects. The results are shown by empty circles in Fig.~\ref{fig1}(b) for $p = 1,\ldots,6$ in logarithmic coordinates: $\log_2 S_p$ vs. $n = -\log_2\ell_n$. This figure verifies that power laws (\ref{eq13d}) with exponents (\ref{eqN2a}) shown by red lines are undistinguishable from the numerical results up to tiny statistical fluctuations.

\begin{figure}
\centering
\includegraphics[width=0.73\textwidth]{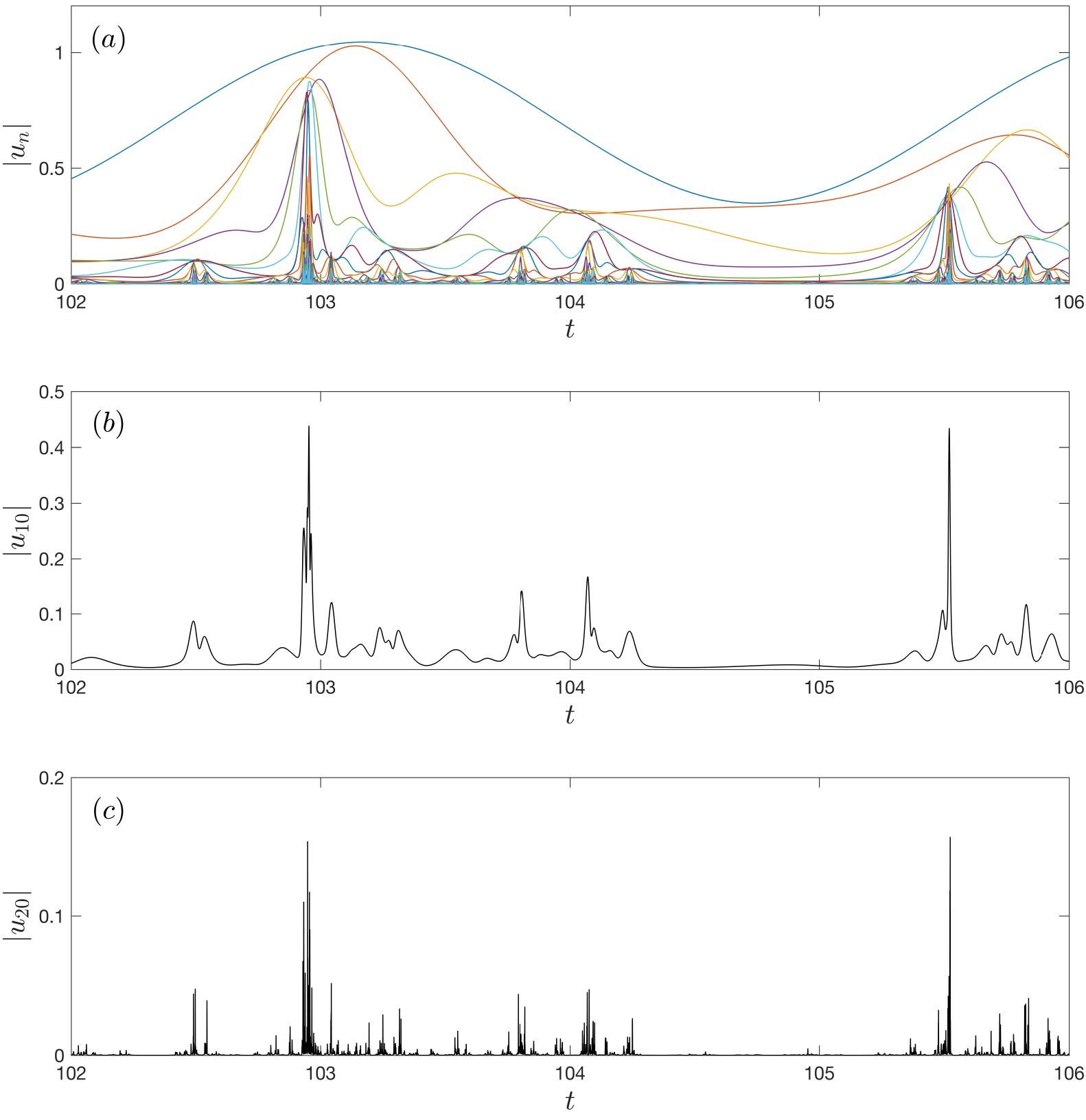}
\caption{Evolution of absolute values of shell variables $|u_n(t)|$ obtained by the numerical simulation: (a) for the shells $n = 1,\ldots,20$, (b) $n = 10$ and (c) $n = 20$. One can see that the intermittency gets stronger at smaller scales (larger shell numbers).}
\label{fig2}
\end{figure}

Figure~\ref{fig2} shows typical temporal behaviour for absolute values of shell variables $|u_n|$ as functions of time: the panel (a) presents all shell variables, (b) $n = 10$ and (c) $n = 20$. These figures demonstrate the intermittent dynamics, which is intrinsic to the anomalous scaling: for larger shell numbers $n$ (smaller scales $\ell_n = 2^{-n}$), the dynamics features much shorter intervals of large-amplitude ``turbulent'' oscillations separated by much longer intervals of low-activity ``laminar'' behaviour.

\section{Discussion: hidden symmetry and intermittency}\label{sec_D}

We presented a solvable intermittent shell model of turbulence, in which anomalous exponents of structure functions are computed analytically. In this final section, we develop a deeper insight into the origin of anomalous scaling in this  model. Our ability to solve the equations of motion is based on the representation (\ref{eq8}) for the Kolmogorov multipliers $w_n$ written as functions of proper temporal variables $\tau_n$. This representation is self-consistent: using (\ref{eq6}), relations among different times are expressed in terms of multipliers as
	\begin{equation}
	\label{eqHS1}
	d\tau_{n+m} 
	= d\tau_n \times \left\{\begin{array}{ll}
	k_m|w_n \cdots w_{n+m-1}|, & m > 0; \\[5pt]
	k_m/|w_{n+m}\cdots w_{n-1}|,& m < 0.
	\end{array}
	\right.
	\end{equation}
Each multiplier $w_n$ oscillates periodically with respect to the corresponding time $\tau_n$. Then, the mutual dynamics of multipliers as functions of original time $t$ becomes quasi-periodic with rationally independent frequencies; see Eq.~(\ref{eqM3b}). For the truncated system with variables $u_1,\ldots,u_N$,  the attractor dimension is equal to $N$. Therefore, the attractor in our shell model (with $N = \infty$) is genuinely infinite-dimensional.

The quasi-periodicity allows computing 
the structure functions using the ergodic theorem with the uniform invariant measure in the space of phases $\theta_n = \arg w_n$. The property of the measure to be uniform (more importantly, universal with respect to shell numbers $n$) is crucial for the power-law scaling of structure functions. Clearly, this universality follows from the universality of representation (\ref{eq8}), which is identical for all shell numbers $n$. 
The universality of multipliers just described can be interpreted as a new symmetry. The system of equations (\ref{eq8}) and relations (\ref{eqHS1}) among different times are symmetric with respect to the change
	\begin{equation}
	\label{eqHS2}
	w_n \mapsto w_{n+m}, \quad \tau_n \mapsto \tau_{n+m},
	\end{equation}
applied to all shells $n$ with a fixed but arbitrarily chosen shift $m \in \mathbb{Z}$. Transformations (\ref{eqHS2}) define a new (hidden) scaling symmetry of equations of motion, which substitutes the original scaling relations (\ref{eq3}). In the intermittent dynamics, the stationary probability distribution remains symmetric with respect to the hidden scaling symmetry, despite all the original scaling symmetries are broken. This is exactly the property, which yields the anomalous power-law scaling in our system.

The hidden symmetry (\ref{eqHS2}) was analysed in the context of the Sabra shell model of turbulence  in~\cite{mailybaev2021hidden}, and in \cite{mailybaev2020hidden} we showed that this symmetry follows from the non-commutativity of temporal scalings with an evolution operator. Furthermore, extension of this theory to systems with Galilean invariance (Galilean transformations also do not commute with an evolution operator) yields analogous hidden scaling symmetries in realistic models of fluid dynamics~\cite{mailybaev2020hidden}, such as the Euler equations for incompressible ideal fluid~\cite{mailybaev_thalabard2021hidden}. It was also shown in \cite{mailybaev2020hidden} that hidden-symmetric probability distributions yield power-law scaling of structure functions with exponents, which are typically anomalous. The shell model of the present work provides a detailed analytic demonstration of the hidden symmetry and its relation to intermittency, therefore, justifying its possible relevance for real turbulence.

Finally, let us comment on the role of conservation laws such as the conservation of energy, which we ignored in our shell model. The energy cascade is an important constituent of real turbulence, which relates the energy input at large forcing scales with the energy dissipation at small viscous scales. For shell models, the energy is usually defined as $E = \sum |u_n|^2$. Since $|u_n| = |w_1\cdots w_n|$, the conservation of energy at small scales (large $n$) establishes relations among multipliers at distant scales. This explains why equations of motion for the multipliers become coupled~\cite{eyink2003gibbsian} but still maintaining the hidden symmetry (\ref{eqHS2})~\cite{mailybaev2021hidden}. This coupling greatly complicates the analysis but does not change the situation conceptually: the stationary probability distribution remains hidden-symmetric and this symmetry leads to the intermittency, as demonstrated numerically in  \cite{mailybaev2020hidden}. In this case, the multipliers are not statistically independent in a spirit of Kolmogorov's theory of 1962~\cite{kolmogorov1962refinement}, but rather have correlations decaying at distant scales~\cite{benzi1993intermittency,eyink2003gibbsian}. Last but not least, one may expect that the coupling of multipliers induces chaotic and spontaneously stochastic behaviours~\cite{mailybaev2016spontaneously,thalabard2020butterfly,drivas2020statistical}. Further developments of our solvable model may help in understanding the mutual importance of these  effects in turbulence.

\textbf{Acknowledgments: } The author is grateful to Artem Raibekas for his help in the study of ergodicity, and to Theodore D. Drivas, Simon Thalabard and the anonymous reviewer for their comments on the manuscript. The work is supported by CNPq (grants 303047/2018-6, 406431/2018-3).

\bibliographystyle{plain}
\bibliography{refs}

\end{document}